\documentclass[conference,10pt,a4paper]{IEEEtran}
\usepackage{graphicx}
\usepackage{amsthm}
\usepackage{empheq}
\usepackage{amsfonts}
\usepackage{amsmath,amsthm,amssymb}
\usepackage{graphicx,subfigure}
\usepackage{color}
\usepackage{txfonts}
\usepackage{booktabs,array}
\usepackage{multirow}
\usepackage{url}
\usepackage{tikz}
\usepackage{booktabs}
\usepackage[utf8]{inputenc}
\DeclareMathOperator{\E}{\mathbb{E}}

\renewcommand{\eqref}[1]{(\ref{#1})}
\graphicspath{{figures/}}
\newtheorem{Lemma}{Lemma}
\usepackage{epstopdf}
\usepackage{steinmetz}

\usepackage[caption=false,font=footnotesize,labelfont=sf,textfont=sf]{subfig}
\usepackage{flexisym}
\usepackage[T1]{fontenc}

\linespread{0.962}

\begin{document}
	\title{Statistical QoS Analysis of Full Duplex and Half Duplex 
	Heterogeneous Cellular Networks}
\author{\IEEEauthorblockN{Alireza Sadeghi$^{*}$, Michele Luvisotto$^\#$, 
Farshad Lahouti$^{+}$, Stefano Vitturi$^\dag$, Michele Zorzi$^\#$}

	\IEEEauthorblockA{$^*$Electrical and Computer Engineering Department, University of Minnesota, USA\\ }
	\IEEEauthorblockA{$^\#$Department of Information Engineering, University of Padova, Italy}
	\IEEEauthorblockA{$^+$Electrical Engineering Department, California 
	Institute of Technology, USA\\ }
	\IEEEauthorblockA{$^\dag$CNR-IEIIT,	National Research Council of Italy, 
	Padova, Italy\\ }
	{sadeg012@umn.edu, \{luvisott, zorzi, vitturi\}@dei.unipd.it, lahouti@caltech.edu}}	

\maketitle
\begin{abstract}
In this paper, statistical Quality of Service provisioning in next generation 
heterogeneous mobile cellular networks is investigated. To this aim, any active 
entity of the cellular network is regarded as a queuing system, whose 
statistical QoS requirements depend on the specific application. In this 
context, by quantifying the performance in terms of effective capacity, we 
introduce a lower bound for the system performance that facilitates an 
efficient analysis. We exploit this analytical framework to give insights about 
the possible improvement of the statistical QoS experienced by the users if the 
current heterogeneous cellular network architecture migrates from a Half 
Duplex to a Full Duplex mode of operation. Numerical results and analysis are 
provided, where the network is modeled as a Matérn point processes with a hard 
core distance. The results demonstrate the accuracy and computational 
efficiency of the proposed scheme, especially in large scale wireless systems.
\end{abstract}

\section{Introduction}
\label{sec:introduction}
The ever increasing demand for mobile data traffic continues with the advent of smart phones, tablets, mobile 
routers, and cellular M2M devices. This is accompanied by user
behavioral changes from web browsing towards video streaming, social 
networking, and online gaming with distinct QoS requirements 
\cite{ericsson2014}. To handle this challenging scenario, researchers are 
examining different enabling technologies for 5G, including mmWave 
communications for wider bandwidth, extreme densification of the network via 
low 
power base stations (known as heterogenous networks), the use of large--scale 
antenna systems (known as Massive MIMO), and wireless Full Duplex entities 
\cite{andrews20145G}. 

The new cellular architecture known as Heterogeneous Cellular Networks (HCNs) 
refers to a scenario in which the macro cellular network is overlaid by 
heterogeneous low--power base stations (BSs). 
Such low power BSs have small coverage areas and are characterized by 
their own transmit power and named accordingly as micro, pico and femto cells. 
They are used to increase the capacity of the network while 
eliminating coverage holes \cite{HetNetfromTheorytoPractic}.

The Full Duplex (FD) radio technology enhances spectrum 
efficiency by enabling a node to transmit and receive in the same 
frequency band at the same time. This new emerging technology has the potential 
to double the physical layer capacity and enhance the performance even more, 
when higher layer protocols are redesigned accordingly \cite{InbandSabharwal}.

Due to the hurdles of canceling self--interference (SI) in FD devices via 
active and passive suppression mechanisms, FD operations are more reliable in 
low power wireless nodes. For instance in \cite{bharadia2013full} the authors 
have implemented an FD WiFi radio operating 
in an unlicensed frequency band with 20~dBm transmit power while the same trend 
is followed in other works like \cite{DuarteExperimentalCharach} where the 
maximum transmit power is 15~dBm. All these implementations suggest FD 
technology as a very good candidate to be used in the low power BSs deployed within 
HCNs. 
Moreover, the increased spectral efficiency of the FD systems, combined with 
that of HCNs, provides another strong motivation in attempting to analyze an FD 
HCN. 

From another perspective, next generation mobile networks (5G) will aim not 
only to increase the network capacity but also to enhance several other 
performance metrics, including lower latency, seamless connectivity, and increased mobility \cite{andrews20145G}. These enhancements can be generally 
referred to as an improvement in the Quality of Service (QoS) experienced by 
the network entities. According to a forecast by Ericsson mobility report, 
approximately 55 percent of all the mobile data traffic in 2020 will account 
for mobile video traffic while another 15 percent will account for social 
networking \cite{ericsson2014}. These multimedia services require a
bounded delay. Generally the delay requirements of time sensitive services in 5G will vary extremely, from milliseconds to a few seconds \cite{ZhangHetQoS}. 
Consequently, the analysis of statistical QoS in HCNs will become extremely important in 
the near future. 

The objective of this paper is to analyze and compare FD and HD HCNs 
in provisioning statistical QoS guarantees to the users in the network. The QoS is assessed statistically in terms of Effective Capacity (EC) as the maximum throughput under a delay constraint \cite{DapengQoS}.
 
Our goal is to provide insights on possible improvements in the QoS experience 
of end users if the current architecture migrates from conventional HD to FD. 
To this end, we propose a lower bound for the EC which greatly 
reduces the complexity of the analysis while tightly approximating the system performance, 
especially in large scale systems. Our results will be validated through numerical 
simulations.

The rest of the paper is organized as follows. Some basic explanations on FD, 
statistical QoS provisioning, and stochastic geometry are provided in 
Section~\ref{Preliminaries}. The system model is 
described in Section~\ref{System Model}. 
Section~\ref{sec:theoretical} presents the proposed lower bound for the system 
performance and the corresponding theoretical analysis, whose results are validated 
through simulations in Section~\ref{sec:simulations}. Finally, 
Section~\ref{sec:conclusions} concludes this paper.
\section{Preliminaries}
\label{Preliminaries}
\subsection{Full Duplex}
In--band Full Duplex (IBFD) devices are capable of transmitting and receiving 
data in the same frequency band at the same time. In traditional wireless 
terminals, the ratio of SI power with respect to that of the received intended 
signal is very high, making any reception infeasible while a transmission is 
ongoing. To overcome this issue, FD terminals are equipped with active and 
passive cancellation mechanisms to suppress their own SI in the received signal \cite{InbandSabharwal}. 
However, in practice, because of the many imperfections in transceiver operations, 
full cancellation of the SI signal is not possible. Therefore, some residual 
self--interference (RSI) always remains after all cancellation steps and
results in a degraded system performance. 

The RSI signal represents the main obstacle for a perfect FD communication and, 
similar to noise, is essentially uncorrelated with the original transmitted 
signal. We model the RSI signal at the FD transceiver as a complex Gaussian 
random variable \cite{RamirezOptimal}
\begin{equation}
\mathrm{RSI} \sim \mathbb{CN} \left( {0,\eta {P^\kappa }} \right),
\label{eq}
\end{equation}
where $P$ is the transmit power, while $\eta$ and $\kappa$ are parameters to 
model the SI cancellation performance.
Specifically, $\eta$ is the linear SI cancellation parameter, while $\kappa$ 
models non--linear 
SI cancellation, $0 \le \eta,\kappa  \le 1$. 
When no SI cancellation is performed $\eta , \kappa=1$, while $\eta=0$ 
represents the ideal case of perfect SI cancellation.

\subsection{Statistical QoS guarantees}
\label{subsec:qos_guarantees}

Real--time multimedia services like video streaming require bounded delays. In 
this context, a received packet that violates its delay bound requirement is 
considered useless and discarded. 
Due to the wireless nature of the access links in a mobile cellular network, 
providing deterministic delay bound guarantees is not possible. 
Thus, the concept of EC, defined as the maximum throughput under a given delay 
constraint, has been used to analyze multimedia wireless systems 
\cite{zhang2006cross}. 
Any active entity in a cellular network can be regarded as a queueing system: 
it 
generates packets according to an arrival process, stores them in a queue and 
transmits them according to a service process.
For stationary and ergodic arrival and service processes, the probability that 
the queue size, $Q$, exceeds a certain threshold, $B$, decays exponentially 
fast as the threshold increases \cite{zhang2006cross}, i.e.,

\begin{equation}
\Pr \left\{ {Q > B} \right\} \sim {e^{ - \theta B}}  \qquad  \textrm{as} \, B 
\rightarrow \infty ,
\label{eq1}
\end{equation}
where $\theta$ denotes the decaying rate of the QoS violation probability. The 
smaller $\theta$, the looser the QoS requirement.

Define the service provided by the channel until time 
slot $t$ as 
\begin{equation}
C\left( {0,t} \right) = \sum\limits_{k = 1}^t {R\left[ k \right]},
\label{eq3}
\end{equation}
where $R[k]$ denotes the number of bits served in time slot $k$. 
The effective capacity of the channel is defined as \cite{DapengQoS}
\begin{equation}
{\rm{EC}}\left( \theta  \right) =  - \frac{{{\Lambda _C}\left( { - \theta } \right)}}{\theta }
\label{eq4}
\end{equation}
where 
$\Lambda_C\left(-\theta\right)=\lim\limits_{t\to\infty}\frac{1}{t}\log\E\left\{
e^{-\theta C(0,t)}\right\}$
is the G\"artner - Ellis limit of the service process $C(0,t)$. 

If the instantaneous service process, $R[k]$, is independent in time, 
EC can be simplified to 

\begin{equation}
{\rm{EC}}\left( \theta  \right) =  - \frac{1}{\theta }\log {\E\left \{ {{e^{ - 
\theta R[k]}}} \right \}} 
\label{eq5}
\end{equation}

It is worth mentioning that, for $\theta \rightarrow 0$, the EC tends to the 
average service rate \cite{zhang2006cross}.

\subsection{Stochastic Geometry}
\label{Stochastic Geometry}

Stochastic geometry is a powerful mathematical tool that has recently been 
proposed to model and analyze the performance of wireless networks 
\cite{StochGeoEk}. 
In particular, Poisson Point Processes (PPPs) have been vastly used to model 
the positions of the network entities in HCNs. This approach has enabled the 
study of realistic scenarios where the BSs are not placed on a hexagonal grid 
but are instead spread randomly in the network. The use of a PPP to model the 
system has become popular because of its tractability and its ability to give 
simple expressions for some network performance metrics like coverage 
probabilities and mean transmission rates \cite{StochGeoEk}, 
\cite{DhillonkHetNet}, and \cite{AndrewsPPP}. 
However, in a real cellular network, the adoption of a simple PPP to model the 
locations of the BSs does not capture an important characteristic of the 
network, namely the constraints on the minimum distance between any two 
BSs or UEs imposed by the MAC layer, network planning or spectrum access policies 
\cite{StochGeoEk}. 
Consequently, spatial correlation among different network entities should be 
taken into account. According to these considerations, a repulsive point 
process with a hard core distance such as the Mat\'ern hard core point process 
(HCPP), despite its higher complexity, represents a better candidate to model a HCN 
compared to a simple PPP \cite{StochGeoEk}.

\section{System Model}
\label{System Model}

We refer to Fig.~\ref{fig1} as our system model in both HD and FD scenarios. 
When the system is HD a conventional HCN is assumed, while in the FD case we 
consider a completely FD HCN where all the network entities are assumed to be 
(imperfect) FD devices. 
In our model we consider a circular macro cell, overlaid by different tiers of small cells, each with its own 
characteristics including transmit power, path loss exponent, and coverage 
range. 
Each tier is assumed to have a circular coverage area provided by an 
omnidirectional antenna to serve any user 
within its coverage range. 

In addition, we assume a Mat\'ern Point Process with hard core distance to 
model the location of non--overlapping small 
cells and the distribution of the user equipments (UEs) within each small cell. In our FD HCN system model, the nodes communicate in bidirectional FD mode, as depicted in Fig. \ref{fig1}. The small cells are assumed to use out of band resources 
like fiber optics, wire, or microwave links for backhauling.  

In our analysis, we assume that the positions of the BSs are known by the 
network operator. This assumption is not far from reality since, when the 
network operator wants to give service to a BS, the location of the 
BS must be communicated to the operator. The locations of the users of 
each BS are assumed to be uniformly distributed in the coverage area of that 
BS, resulting in a Mat\'ern PP but with known locations for the cluster 
heads.

In the FD scenario, a UE in the network experiences three different types of 
interference: (1) RSI, due to concurrent transmission and reception in the same 
frequency band at the same time; (2) interference from BSs that are 
transmitting in the same resource blocks (RBs) in which the UE is served; and 
(3) interference from other UEs in the network that are transmitting in the 
same RBs in which the UE is served. In an HD scenario, instead, the UE will not 
face RSI and interference from other UEs. However, the interference from other 
BSs will still be present and can be even greater than in the FD case, 
according to the adopted scheduling policy. To consider the worst case, in both 
the FD and HD scenarios, we have assumed that all the BSs in the network are 
transmitting in the same RB where the UE is served.

We analyze the system on a resource block basis. In fact the scheduling 
decisions in an LTE--Advanced cellular network are made on 
a 1~ms basis and each time the scheduler in the BS grants an arbitrary 
combination of 180 kHz $\times$ 0.5 ms wide RBs to a UE based on the Buffer 
Status Report (BSR) and Channel State Information (CSI) obtained by measuring 
the reference signals in both time and frequency \cite{dahlman20134g}.
This allows the scheduler to track the variations of the channel in time and 
frequency in order to schedule resources efficiently. 

\begin{figure}
	\vspace{-.5cm}
	\centering
	\scalebox{0.9}{\begin{tikzpicture}
	
	\node at (0,1.15) {\includegraphics[scale=0.3]{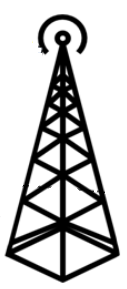}};
	\node at (1.35,-0.2) {\includegraphics[scale=0.06]{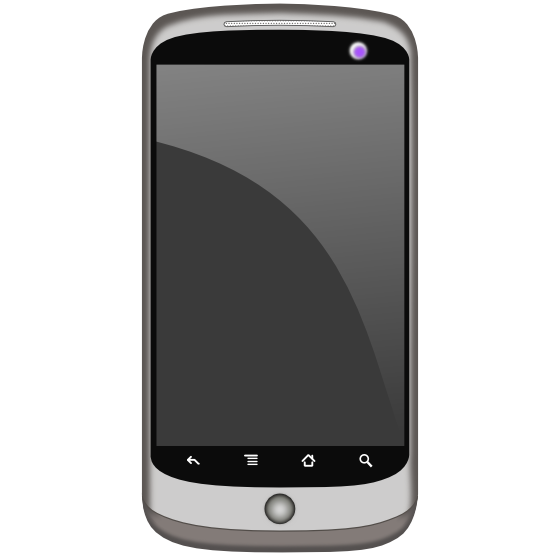}};
	\node at (4.65,-.95) {\includegraphics[scale=0.06]{ue.png}};
	\draw[draw=black,dotted] (0,0) ellipse (4.3cm and 1.95cm);
	
	\draw[draw=black,dashed] (-2.5,-0.75) ellipse (1.25cm and .45cm);
	\draw[draw=black,dashed] (-1.3,+1.25) ellipse (.85cm and .2cm);
	\draw[draw=black,dashed] (+2.25,+0.75) ellipse (1cm and .3cm);
	
	\node at (-2.5,-0.2) {\includegraphics[scale=0.15]{macro_cell.png}};
	\node at (-1.3,+1.7)  {\includegraphics[scale=0.11]{macro_cell.png}};
	\node at (+2.25,1.25) {\includegraphics[scale=0.13]{macro_cell.png}};

	\draw[draw=black,dashed] (-2.65,+.7) ellipse (0.48cm and .15cm);
	\draw[draw=black,dashed] (+1.25,+1.35) ellipse (0.45cm and .095cm);
	\draw[draw=black,dashed] (0.25,-1.1) ellipse (0.44cm and .11cm);
	\draw[draw=black,dashed] (1,-1.5) ellipse (0.47cm and .14cm);

	\node at (-2.64,1) {\includegraphics[scale=0.075]{macro_cell.png}};
	\node at (+1.2,+1.65) {\includegraphics[scale=0.07]{macro_cell.png}};
	\node at (0.23,-.78) {\includegraphics[scale=0.08]{macro_cell.png}};
	\node at (1,-1.15) {\includegraphics[scale=0.08]{macro_cell.png}};
	
	\draw[draw=black,dashed] (3.75,-1) ellipse (1.5cm and 0.4cm);
	
	\node at (3.25,-.75) {\includegraphics[scale=0.08]{macro_cell.png}};
	
	\draw[draw=black,dashed] (-1.3,+1.25) ellipse (.85cm and .2cm);
	\draw[draw=black,dashed] (+2.25,+0.75) ellipse (1cm and .3cm);	
	
	\node at (-0.2,-.45) {\scriptsize Macro};
	\node at (-3,-1) {\scriptsize Pico};
	\node at (-1.8,+1.25) {\scriptsize Pico};
	\node at (-1.8,+1.25) {\scriptsize Pico};
	\node at (+1.7,.7) {\scriptsize Pico};
	\node at (3.85,-1.55) {\scriptsize Relay};
	\node at (-3.1,0.45) {\scriptsize Femto};
	\node at (.25,-1.35) {\scriptsize Femto};
	\node at (1.05,-1.76) {\scriptsize Femto};
	\node at (1.75,1.55) {\scriptsize Femto};
	
	\draw [-latex,draw=black,dashed] (.2,2.1)--(3.15,-.42);
	\draw [-latex,draw=black] (3.35,-.42)--(4.5,-.75);
	\node at (-3.35,-.65) {\includegraphics[scale=0.06]{ue.png}};
	\node at (-1.65,-.75) {\includegraphics[scale=0.06]{ue.png}};
	
	\draw [latex-latex,draw=black] (-3.25,-0.35)--(-2.65,+.35);
	\draw [latex-latex,draw=black,dashed] (-2.35,+.35)--(-1.65,-.5);
	\draw [latex-latex,draw=black,dotted] (.2,2.1)--(1.25,0.15);
	
	\end{tikzpicture}}
\vspace{-.25cm}
	\caption{System model: an HCN with one macrocell and several LPNs.}
	\label{fig1}
\end{figure}
\vspace{-.25cm}
\subsection{Interference from UEs and BSs}
\label{subsec:interference_modelling}

We consider a Rayleigh fading, path loss dominated, AWGN channel model.
With this model, the interference at the desired UE from another entity of the 
network located at distance $x$ is given by $Ph{\left\| x 
\right\|^{- \alpha }}$ where $P$ is transmit power of the interferer, $h$ is an 
exponential random variable modeling Rayleigh fading, $h \sim \exp \left( 1 
\right)$, and $\alpha$ represents the path loss exponent. 

Considering all the above mentioned terms, the Signal--to--Interference plus 
Noise Ratio (SINR) at the desired FD UE is expressed as \cite{DhillonkHetNet}
\begin{align} 
\nonumber
\textrm{SINR} = \frac{{{P_{i}}{h_{{x_i}}}{\left\|x_i\right\|}{^{ - \alpha_i 
}}}}{{\sum\limits_k {\sum\limits_{x \in {\Phi _k^{\textrm{BS}}}} 
{{P_k}{h_x}\left\|x\right\|{^{ - {\alpha _k}}}} }  + \sum\limits_k 
{\sum\limits_{y \in {\Phi _k^{\textrm{UE}}}} {{P_{\textrm{UE}}}} } {h_y}\left\|y\right\|{^{ - {\alpha 
_k}}} + \eta {P^\kappa } + {\sigma ^2}}}.\\ 
\label{eq7}
\end{align}
In this notation, the numerator represents 
the desired signal power received from a BS in the $i^{th}$ tier which serves 
the UE. Here, a tier defines the set of the BSs that have the same 
characteristics including average transmit powers, supported data rate, 
coverage areas, BSs density \cite{DhillonkHetNet}. 
The first and second term in the denominator represent the interference from other 
BSs and UEs in the network operating in the same RBs as the desired UE. Specifically, $\Phi_k^{\textrm{BS}}$ and $\Phi_k^{\textrm{UE}}$ indicate sets containing the 
positions of all \textit{interfering} BSs and UEs in the $k^{\textrm{th}}$ tier, and the summation is over all 
possible tiers. The third term is the RSI signal power as modeled 
in \eqref{eq}. Finally, $\sigma^{2}$ is the additive noise power. 

We recall that the number of bits delivered to a UE during an interval, $T_f$, in a given 
bandwidth, $BW$, if capacity achieving modulation and coding are used, can be 
represented as 

\begin{equation}
R=T_f\cdot BW\cdot\log_2\left(1+\textrm{SINR}\right)\textrm{.}
\label{eq8}
\end{equation}
Therefore, the effective capacity of the desired UE based on \eqref{eq5} can be 
expressed as 
\begin{align}
\textrm{EC}\left(\theta\right)&= 
-\frac{1}{\theta}\log\left(\mathbb{E}\left\{\textrm{exp}\left(-\theta\cdot T_f\cdot 
BW\cdot\log_2\left(1+\textrm{SINR}\right)\right)\right\}\right) \nonumber\\
\label{eq9}
& = 
-\frac{1}{\theta}\log\left(\mathbb{E}\left(\left(1+\textrm{SINR}\right)^{-\theta\cdot 
T_f\cdot BW\cdot\log_2e}\right)\right),
\end{align}
where the expectation is taken with respect to the SINR. 

In an HD scenario, a 1/2 scaling factor is needed and, also, the SINR would 
become
\begin{align} 
\textrm{SINR} = \frac{{{P_i}{h_{{x_i}}}{\left\|x_i\right\|}{^{ - \alpha_i 
			}}}}{{\sum\limits_k {\sum\limits_{x \in {\Phi _k^{\textrm{BS}}}} 
			{{P_k}{h_x}\left\|x\right\|{^{ - {\alpha _k}}}} } + {\sigma ^2}}}. 
\label{Dual-eq7}
\end{align}
\section{Theoretical analysis}
\label{sec:theoretical}
We aim at computing the QoS experienced by a generic UE, that can be placed 
anywhere in the coverage area of its own small cell with uniform distribution. 
To find the exact EC in a given topology, one needs to solve 
\eqref{eq9} either through extensive simulations or by mathematical analysis. 
It is worth mentioning that, if there are $M$ small cells within the macro 
cell, the associated integrals would be in a $2M+1$ dimensional parameter 
space, when a worst case scenario is assumed, i.e., any other small cell and 
the macro cell present one active UE operating in the same RB as our desired 
UE. On the other hand, if a simulation approach is pursued, the length of 
simulations 
in order to achieve a given confidence level will increase at least linearly 
with $M$.
This scaling may represent a prohibitive factor in finding the exact EC in a 
real scenario. 

\subsection{Approximating EC}
\label{subsection:approximating the EC}

Let us define a generic function $g$ of $s, I, a,$ and  $\beta$ as 
follows
\begin{equation}
g(s,I) =  \left(1+\frac{s}{I+a}\right)^{-\beta}.
\label{eq10}
\end{equation}
This function has the same structure of the expectation argument in 
\eqref{eq9}, where $s$ models the received signal power, $I$ represents the 
overall interference from other BSs and UEs in the network, $a$ represents the 
RSI and noise, and $\beta=\theta\cdot T_f\cdot BW\cdot\log_2 e$. Based on this 
definition, the EC can be rewritten as
\begin{equation}
EC(\theta)=-\frac{1}{\theta}\log \left( \E\limits_{s,I}  g \left( s,I  
\right)\right).
\label{eq11}
\end{equation}
\begin{Lemma}
\label{Lemma1}
For $0\le \beta \le 1$, $g$ is always a concave function of $I$. 
\end{Lemma}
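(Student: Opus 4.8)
The plan is to rewrite $g$, viewed as a function of $I$ with $s,a\ge 0$ and $\beta\in[0,1]$ fixed, so that it appears as the composition of a concave nondecreasing outer function with a concave inner function. First I would observe the algebraic identity
\[
g(s,I) = \left(1 + \frac{s}{I+a}\right)^{-\beta} = \left(\frac{I+a}{I+a+s}\right)^{\beta} = \phi\bigl(\psi(I)\bigr),
\]
where $\psi(I) = \dfrac{I+a}{I+a+s}$ and $\phi(t) = t^{\beta}$. Since the term $a$ collects the RSI and noise powers and the noise power $\sigma^2$ is strictly positive, $I+a>0$ on the whole domain; and if $s=0$ or $\beta=0$ then $g$ is constant and trivially concave, so we may assume $s>0$, $\beta>0$, in which case $\psi(I)\in(0,1)$.

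The second step is to verify the two building blocks. Writing $w=I+a>0$, a one-line computation gives $\psi'(I)=s/(w+s)^2>0$ and $\psi''(I)=-2s/(w+s)^3<0$, so $\psi$ is increasing and concave on the domain, with range inside $(0,\infty)$. For the outer function, $\phi'(t)=\beta t^{\beta-1}\ge 0$ and $\phi''(t)=\beta(\beta-1)t^{\beta-2}\le 0$ \emph{precisely} because $0\le\beta\le 1$; hence $\phi$ is nondecreasing and concave on $(0,\infty)$. Invoking the standard rule that a concave nondecreasing function composed with a concave function is concave --- which applies here because the range of $\psi$ lies where $\phi$ has the required properties --- we conclude that $g(s,\cdot)$ is concave in $I$.

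I do not anticipate a genuine obstacle; the only subtlety is the direction of the monotonicity hypothesis in the composition rule. It is the outer map that must be \emph{nondecreasing}, and this is exactly what the rewriting $g=\bigl(\tfrac{I+a}{I+a+s}\bigr)^{\beta}$ secures, whereas the more obvious form $\bigl(1+s/(I+a)\bigr)^{-\beta}$ displays a convex decreasing outer function composed with a convex decreasing inner function, a combination that carries no automatic sign for the second derivative. If one prefers a self-contained check, differentiating twice and setting $w=I+a$ yields $\partial^2 g/\partial I^2 = \beta s\,w^{\beta-2}(w+s)^{-\beta-2}\bigl[(\beta-1)s-2w\bigr]$, and the bracketed factor is $\le 0$ for $\beta\le 1$, which again gives concavity.
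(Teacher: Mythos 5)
Your proof is correct, but it proceeds by a different route than the paper. The paper's proof is a direct computation: it differentiates $g$ twice with respect to $I$ (treating $s$ as constant), obtains
\[
\frac{\partial^2 g}{\partial I^2} = \frac{\beta s}{(I+a)^4}\left(1+\frac{s}{I+a}\right)^{-(\beta+2)}\bigl(-2(I+a)+(\beta-1)s\bigr),
\]
and reads off that the sign is negative whenever $0\le\beta<1+\tfrac{2}{\mathrm{SINR}}$; since the SINR is a random quantity, the paper then restricts to $\beta\le 1$ as the range for which concavity holds uniformly. Your main argument instead rewrites $g=\bigl(\tfrac{I+a}{I+a+s}\bigr)^{\beta}=\phi(\psi(I))$ and invokes the composition rule (concave nondecreasing outer with concave inner), with the monotonicity direction correctly identified as the delicate point; this is cleaner and sidesteps the derivative bookkeeping entirely, and your handling of the degenerate cases $s=0$, $\beta=0$ and of the positivity of $I+a$ is careful. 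What the paper's calculation buys in exchange is the explicit sharper threshold $\beta<1+\tfrac{2}{\mathrm{SINR}}$, which explains \emph{why} $\beta\le 1$ is the right uniform condition and motivates the paper's later remark that the bound remains accurate for somewhat larger QoS exponents; your own closing sanity check, $\partial^2 g/\partial I^2=\beta s\,w^{\beta-2}(w+s)^{-\beta-2}\bigl[(\beta-1)s-2w\bigr]$ with $w=I+a$, is algebraically identical to the paper's expression, so in effect you have both proofs, with the composition argument as the primary one.
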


\begin{proof}
By assuming $s$ as a constant, taking the second derivative of $g$ with respect 
to $I$ leads to 
\begin{align}
\frac{{{\partial ^2}g(s,I)}}{{{\partial}{I^2}}} = \frac{{\beta s}}{{{{\left( {I 
+ a} \right)}^4}}}{\left( {1 + \frac{s}{{I + a}}} \right)^{ - \left( {\beta  + 
2} \right)}}\left( { - 2\left( {I + a} \right) + \left( {\beta  - 1} \right)s} 
\right),
\label{eq12}
\end{align}
which is negative (i.e., $g$ is a concave function of $I$) for any value of 
$0\le \beta <1 + \frac{2}{{\textrm{SINR}}}$.
But since SINR is a random variable depending on the instantaneous 
values of signal power, $s$, and interference plus noise, $I+a$, we can only be 
sure that $g$ is always a concave function of $I$ for any $0\le \beta \le1$. 
\end{proof}
The concavity of $g$ helps to find a tight lower bound for the EC with greatly 
decreased complexity. To this end, by exploiting Jensen's inequality, we obtain
\begin{equation}
EC_{\textrm{LB}}\left( \theta  \right) =\frac{{ - 1}}{\theta }\log {\E_s}\left[ 
{{{\left( {1 + \frac{s}{{\bar{I}  + a}}} \right)}^{ - \beta }}} \right] 
\le EC\left( \theta  \right).
\label{eq13}
\end{equation} 

Here, $\bar{I}=\E_I(I)$ is the average interference experienced by the UE, and 
the remaining expectation only applies to the desired signal power. The 
advantage of this lower bound is its extremely reduced computational 
complexity. Indeed, calculating this lower bound only requires a 1--dimensional 
integral with respect to the desired signal power. 
Therefore, the proposed bound makes this calculation scalable with the size of 
the network, at the possible cost of losing some precision. 

It must be noted that the constraint on $\beta$, $0 \le \beta \le 1$, imposes a constraint on  
$\theta$, namely 
\begin{equation}
	0 \le \theta \le \frac{1}{T_f BW \log_2 e} \approx 10^{-2}.
	\label{constraintQoS}
\end{equation}
While this range generally includes all the meaningful values of the QoS 
exponent, in the simulations we will show that our method gives a good 
approximation for EC in an even wider range of $\theta$.

\begin{figure}[!t]
\vspace{-.3cm}
\centering
\begin{tikzpicture}
\draw[draw=black,thick,dotted] (-3,-.7) circle (.95);
\draw[draw=black,thick,dotted] (0,-.1) circle (1.3);
\draw[draw=black,fill=black] (-1.55,+1.15) circle (0.066);
\draw[draw=black,fill=black] (0,-.1) circle (0.06);
\draw[draw=black,fill=black] (-3,-.7) circle (0.06);
\draw[draw=black,thick,dashed] (-5.5,+1.5) arc (183:365:3.85cm);
\draw[draw=black,thick] (-2.81,-0.65) arc (10:130:0.2cm);
\draw[draw=black,thick] (0.25,-.04) arc (-10:95:0.2cm);
\draw[draw=black,thick] (0.01,.06) arc (90:170:0.2cm);
\draw[draw=black,thick] (-.8,-.12) arc (160:210:0.15cm);

\draw[latex-,draw=black,thick,dashed] (-5.45,+1.2) -- (-1.55,+1.15);	
\draw[-latex,draw=black,dashed] (-3,-.7) -- (-3,-1.65);	
\draw[-latex,draw=black,dashed] (0,-.1) -- (0.25,-1.355);
\draw[line width=0.7pt, draw=black] (-3,-.7) -- (0.5,.02);
\draw[-latex, draw=black] (-3,-.7) -- (-3.4,-.2);
\draw[-latex, draw=black] (0,-.1) -- (0.1,0.65);
\draw[draw=black] (0.1,0.65) -- (-3.4,-.2);
\draw[draw=black] (-3.4,-.2) -- (0,-.1);

\node[label=below:\rotatebox{0}{$R_1$}] at (-3.15,-.8) {};
\node[label=below:\rotatebox{0}{$R_2$}] at (+0.3,-.405) {};
\node[label=below:\rotatebox{0}{$r_1$}] at (-3.35,-.25) {};
\node[label=below:\rotatebox{0}{$\theta_1$}] at (-2.8,0) {}; 
\node[label=below:\rotatebox{0}{$r_2$}] at (-.11,0.7) {};
\node[label=below:\rotatebox{0}{$\theta_2$}] at (0.41,0.6) {};  
\node[label=below:\rotatebox{0}{$d$}] at (-1.75,-0.23) {};
\node[label=below:\rotatebox{0}{$c$}] at (-1.75,.28) {};
\node[label=below:\rotatebox{0}{$x$}] at (-1.75,0.65) {};
\node[label=below:\rotatebox{0}{$\gamma$}] at (-.25,0.35) {};
\node[label=below:\rotatebox{0}{$\psi$}] at (-1.1,0.2) {};

\node at (-1.55,+1.35) {Macro BSs};
\end{tikzpicture}
\caption{Structure of interference in the system.}
\label{fig2}
\end{figure}
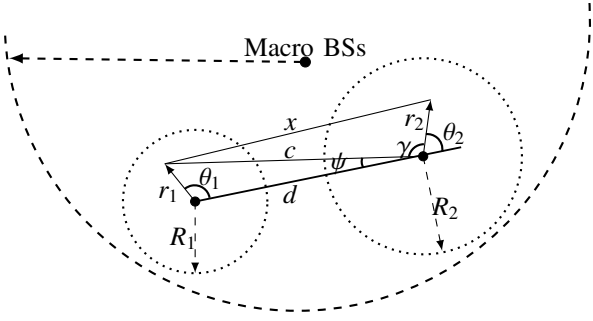

\subsection{Average Interference on a UE}
\label{subsec:average_interference}

In order to efficiently compute the lower bound 
$EC_{\textrm{LB}}\left(\theta\right)$, one has to calculate analytically the 
average interference, $\bar {I} $. 
We propose here a mathematical analysis that could serve as a building 
block for this goal. 

Recalling the expression of the interference from 
Section~\ref{subsec:interference_modelling}, the average interference from an interferer located at distance $x$ from the considered UE can be found as
\begin{equation}
\E[Ph\Vert x\Vert^{-\alpha}] \mathop  = \limits^{\rm{(a)}}  P\E (h) \E[\Vert 
x\Vert^{-\alpha}] \mathop  = \limits^{\rm{(b)}} P \E[\Vert x\Vert^{-\alpha}],
\label{eq14}
\end{equation}
where (a) follows from the fact that the channel coefficient and distance 
between the interferer and the desired UE are independent random variables and 
(b) holds because the random variable $h$, which accounts for channel fading, 
has unit mean. Consequently, all our efforts will be focused on finding the 
average path loss from the desired UE to the interferers. 

Fig. \ref{fig2} depicts a deployment of two small cells and their 
corresponding coverage areas within a macro cell. This is an example of a 
Mat\'ern HCPP with two cluster heads (BSs) and a hard core distance $ r_h \ge 
{R_1+R_2} $. This assumption for the hard core distance makes the two small 
cells non--overlapping. The probability density functions (PDFs) of the 
interferer and desired UE locations, expressed in polar coordinates, are, 
respectively, $f_i(r_i,\theta_i)=\frac{r_i}{\pi R_i^2}$, $i={1,2}$ within their 
coverage area and zero outside. 

We can find the squared distance between the desired UE and the interferer as
\begin{equation}
{x^2} = {c^2} + r_2^2 - 2c{r_2}\cos \left( \gamma  \right),
\label{eq15}
\end{equation}
where ${c^2} = r_1^2 + {d^2} - 2{r_1}d\cos \left( {{\theta _1}} \right)$, 
$\gamma \,\, = \pi - \theta _2 - \psi$ and $\psi$ is also a random variable 
depending on the interferer's position.

Our goal is to compute the average path loss between the considered UE and the interferers
\begin{equation} 
\E\left[ {{{\left\| x \right\|}^{ - \alpha }}} \right] = \E \left[ {{{\left( 
{{{ {{c^2} + r_2^2 + 2c{r_2}\cos \left( {{\theta _2} + \psi } \right)} }}} 
\right)}^{ - \frac{\alpha }{2}}}} \right]\ ,
\label{eq16}
\end{equation}
which is challenging to compute in general. 
To this end, we first compute the expectation in \eqref{eq16} by assuming a 
fixed position for the interferer, i.e., fixed $(r_1,\theta_1)$. Subsequently, 
we compute the expectation of the resulting quantity with respect to all possible 
values of $(r_1,\theta_1)$. 

Regarding the first step, since we assumed $(r_1,\theta_1)$ is fixed, 
$c$ and $\psi$ become constants, thus facilitating the analysis
\begin{eqnarray}
\label{fixedr1theta1}
&&\E\limits_{\left( {{r_2},{\theta _2}} \right)} \left[ {\left. {{{\left( 
{{c^2} + r_2^2 + 2c{r_2}\cos \left( {{\theta _2} + \psi } \right)} \right)}^{ - 
\frac{\alpha }{2}}}} \right|\left( {{r_1},{\theta _1}} \right)} \right]\\ \nonumber
&&= \int\limits_0^{2\pi } {\int\limits_0^{R_2} {{c^{ - \alpha }}{{\left( {1 + 
{{\left( {\frac{{{r_2}}}{c}} \right)}^2} + 2\left( {\frac{{{r_2}}}{c}} 
\right)\cos \left( {{\theta _2} + \psi } \right)} \right)}^{ - \frac{\alpha 
}{2}}}} } .\frac{1}{\pi }\frac{{{r_2}}}{{{R_2^2}}}{\mkern 1mu} {\mkern 1mu} 
d{r_2}d{\theta _2}\\ \vspace{-1cm}
\\ \label{Closed_Form}
&&\mathop  \simeq \limits^{(a)} {c^{ - \alpha }}\left[ {1 + \frac{{{{\alpha ^2}}}}{8}\left( 
{\frac{{{R_2^4}}}{{3{c^4}}} + \frac{{{R_2^2}}}{{{c^2}}}} \right)+\frac{{ \alpha 
}}{4} {\frac{{{R_2^4}}}{{3{c^4}}}}} \right] \\ \nonumber
\end{eqnarray}
where in (a) we used the first three terms of the Taylor series expansion of 
${\left( {1 + x} \right)^{ - \omega}} = 1 - \omega x + \frac{{\omega\left( 
{\omega + 1}\right)}}{2!}{x^2} +\dots $. 
The Taylor approximation is legitimate if $x<1$ $(c>R_2)$ which is already 
satisfied considering the repulsive point process we have assumed for the small 
cells, characterized by the hard core distance $r_h\ge{R_1+R_2}$.

We recall that this result holds for any fixed values of 
$\left(r_1,\theta_1\right)$. In particular, by setting $r_1 \rightarrow 0$ 
(i.e., $c \rightarrow d$ in \eqref{Closed_Form}), we obtain the average path 
loss 
component between a randomly deployed UE and the BS. Therefore, the average 
interference that an external BS causes to the considered UE uniformly placed 
in any point within the coverage area of its small cell is
\begin{equation}
{I_{{\rm{BS - UE}}}}=P_{BS}{d^{ - \alpha }}\left[ {1 + \frac{{{{\alpha 
^2}}}}{8}\left( {\frac{{{R_2^4}}}{{3{d^4}}} + \frac{{{R_2^2}}}{{{d^2}}}} 
\right)+\frac{{ \alpha }}{4} {\frac{{{R_2^4}}}{{3{d^4}}}}} \right].
\label{eq18}
\end{equation}
To find the average interference generated by another UE we have to 
take the expectation of \eqref{fixedr1theta1} with respect to $(r_1,\theta_1)$ 
\begin{align}
\label{eq19.1}
&\,\,{I_{{\textrm{UE - UE}}}} = {P_{UE}} \cdot \\ \nonumber
&\left( {\mathop {\E} \limits_{\left( {{r_1},{\theta _1}} \right)} \left[ {{c^{ 
- \alpha }}} \right] + \frac{{{\alpha ^2}{R_2^2}}}{8}\mathop {\E} 
\limits_{\left( {{r_1},{\theta _1}} \right)} \left[ {{c^{ - (\alpha  + 2)}}} 
\right] + \frac{{\alpha \left( {\alpha  + 2} \right){R_2^4}}}{{24}}\mathop {\E} 
\limits_{\left( {{r_1},{\theta _1}} \right)} \left[ {{c^{ - (\alpha  + 4)}}} 
\right]} \right).
\label{eq19.2}
\end{align}
To compute the quantity in \eqref{eq19.1}, one needs to calculate only 
$\E\limits_{\left({{r_1},{\theta _1}} \right)}\left[c^{-\alpha}\right]$, since 
the other two expectations can be immediately obtained by replacing $\alpha$ 
with $\alpha+2$ and $\alpha+4$.
We further observe that this expectation corresponds to the average path loss 
component between the interferer and the desired UE's BS. This quantity can be 
derived from \eqref{eq18} by setting $P_{BS}=1$ and changing $R_2$ to 
$R_1$.

The proposed relations in \eqref{eq18} and \eqref{eq19.1} can hence be used as 
a basic mathematical tool to investigate the system performance. 

\section{Simulations and Results}
\label{sec:simulations}
In the simulations we considered a single cell scenario where the macro BS 
is located at the center, overlaid with randomly placed small cells. The 
simulation parameters are reported in Tab.~\ref{tab:parameters}. The
system for the HD scenario is assumed to be frequency--division duplexing 
(FDD). In addition, we should note that the only source of interference that 
does not follow 
the structure provided in Fig. \ref{fig2} is the UE connected to the macro BS. 
For this specific UE we assume that the network operator grants different RBs 
in UL transmission compared to our desired UE (in other words, we assume this 
UE is in HD mode of operation).

The results for the network realizations reported in Figs. 3.a and 4.a are 
shown in Figs.~\ref{fig.4}.b and ~\ref{fig.6}.b, respectively. The dashed small 
cell is the one under investigation and UEs are uniformly deployed around their 
corresponding BSs'. The curves of the EC perceived by a typical UE in the 
dashed small cell is computed according to different methods, for both cases of 
an HD and an entirely FD system.
Specifically, the exact EC for HD and FD (red curves) is obtained by simulating 
\eqref{eq11} in the given HCN realization for randomly placed UEs in the network while the analytical--simulation 
results (green curves) are based on the lower bound provided in \eqref{eq13} 
where the average interference is calculated by using the relations given in 
\eqref{eq18} and \eqref{eq19.1} and the remaining expectation with respect to 
the signal power is obtained through simulation. Finally, to validate our 
analytical calculation of the average 
interference on the desired UE, we plotted the lower bound in 
\eqref{eq13} obtained by computing the average interference through 
simulation rather than using our theoretical analysis (LB--Simulation black 
curves).

\begin{table}[t]
\center
\caption{System and Simulation Parameters}
\begin{tabular}{lll}
	\hline
	\cmidrule(r){1-3}
	Description    & Parameter & Value \\
	\midrule
    Macro BS TX Power       & $P_{\text {M-BS}}$         &  46 dBm      \\
    Pico BS TX Power        & $P_{\text {P-BS}}$         &  35 dBm      \\
    User TX Power            & $P_{\text {UE}}$             &  23 dBm      \\
    Path loss exponent       & $\alpha$                     &  3           \\
    Noise Power              & $ \sigma^2 $                 &  -120 dBm    \\
    Pico--Pico BSs Minimum Distance  &-                     &  180 meters    \\                
    Coverage Radii of Pico cells               &- 
    & 90 meters \\
	\bottomrule
\end{tabular}
\label{tab:parameters}
\end{table}

\begin{figure*}[!t]
	\centering
	\subfigure[]{\includegraphics[width=0.4 \textwidth]{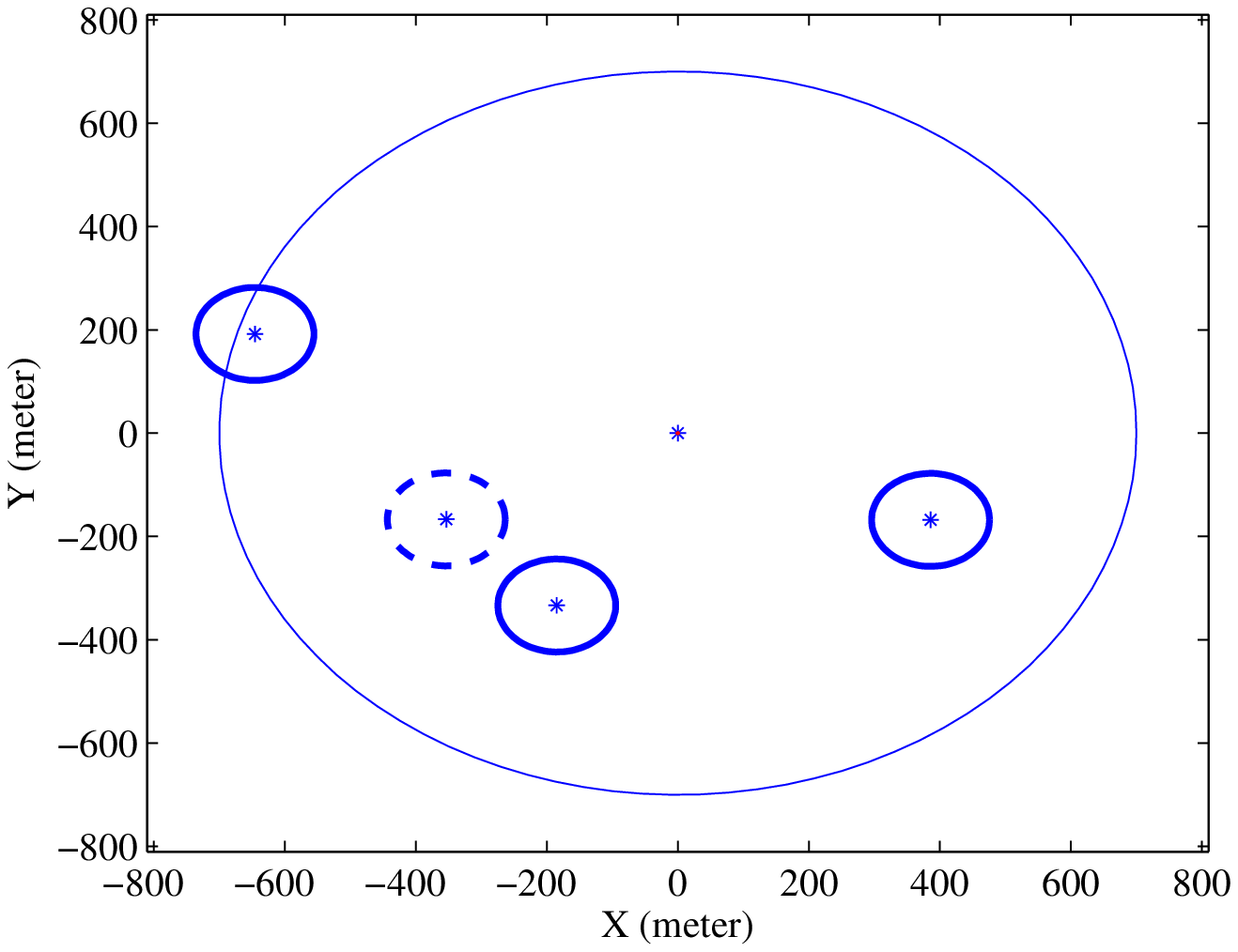}}
	\hfil
	\subfigure[]{\includegraphics[width=0.4 \textwidth]{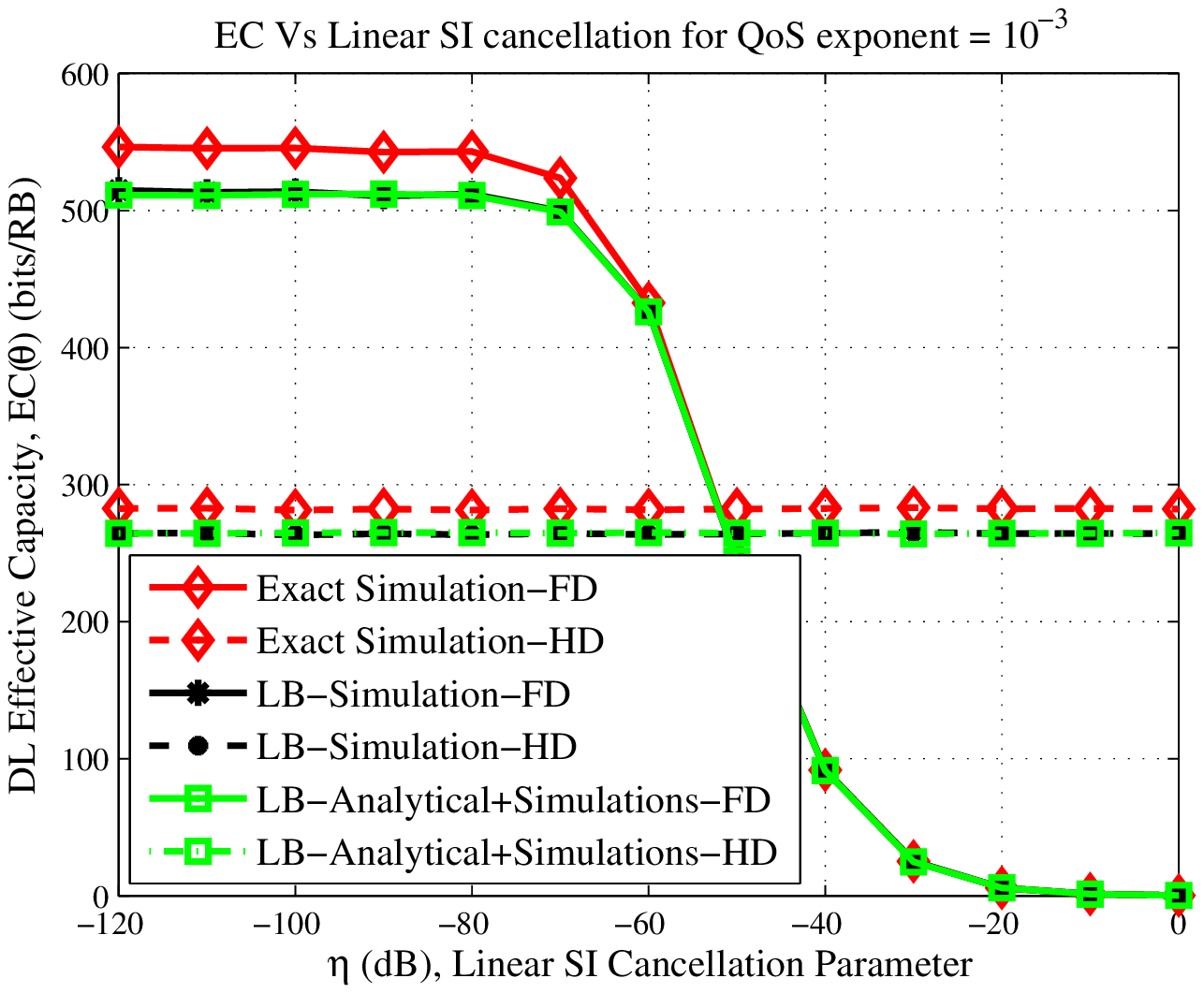}}
	\vspace{-.35cm}
	\caption{A specific instance of sparse small cell deployment as obtained 
	using a Hard Core Mat\'ern PP with density $\lambda=5 \, 
		\textrm{small cell/km}$${^2}$ (a), and corresponding DL effective 
		capacity experienced by a typical UE (uniformly distributed in the 
		dashed small--cell) vs. linear self--interference 
		suppression ratio, for HD and FD (exact and lower bounds) (b). QoS 
		exponent $\theta=10^{-3}$ (1/bit), non linear self--interference 
		cancellation parameter $\kappa=1$.}
	\label{fig.4}
\end{figure*}
\begin{figure*}[!t]
	\vspace{-.55cm}
	\centering
	\subfigure[]{\includegraphics[width=0.4 \textwidth]{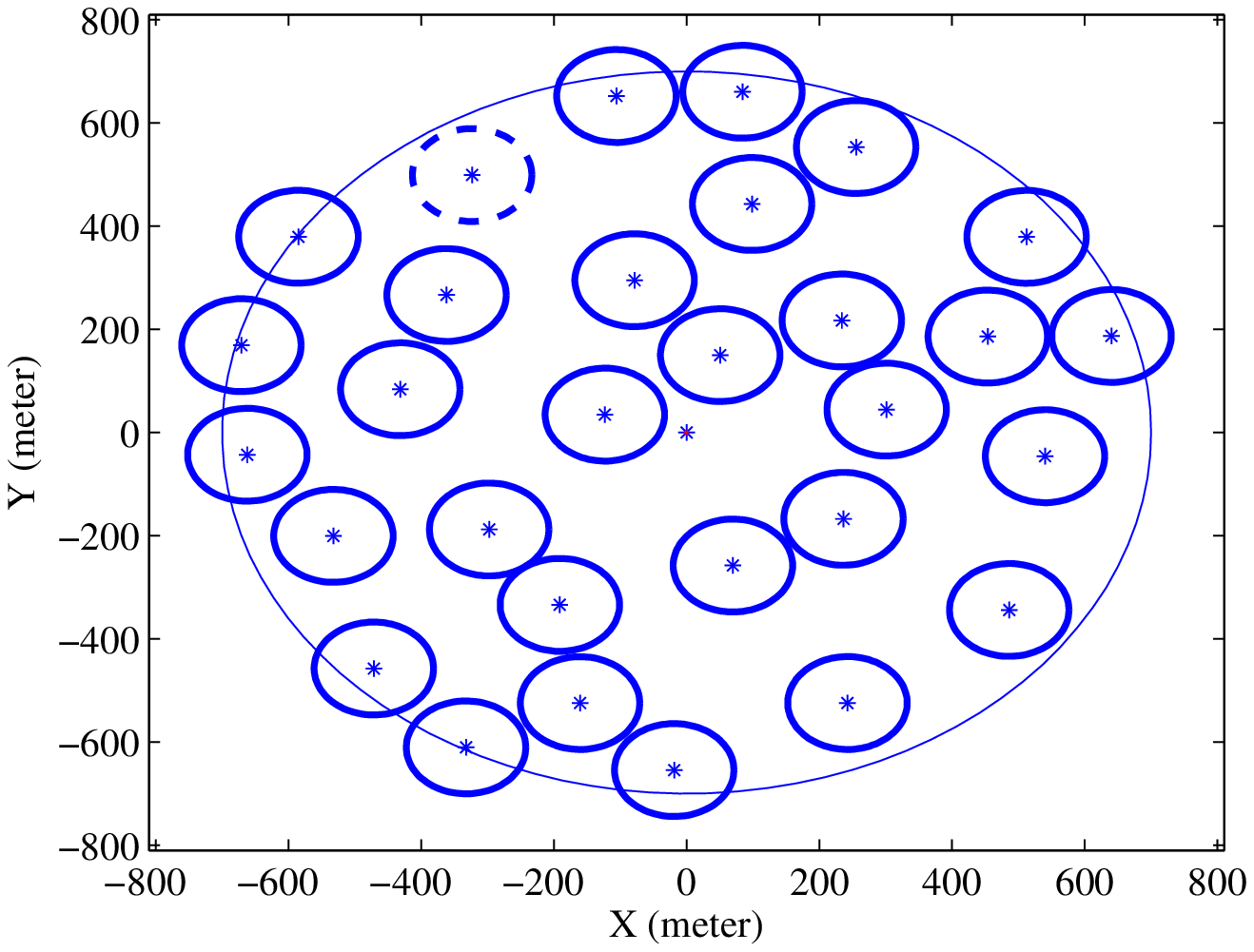}}
	\label{fig.8}
	\hfil
	\subfigure[]{\includegraphics[width=0.4
		\textwidth]{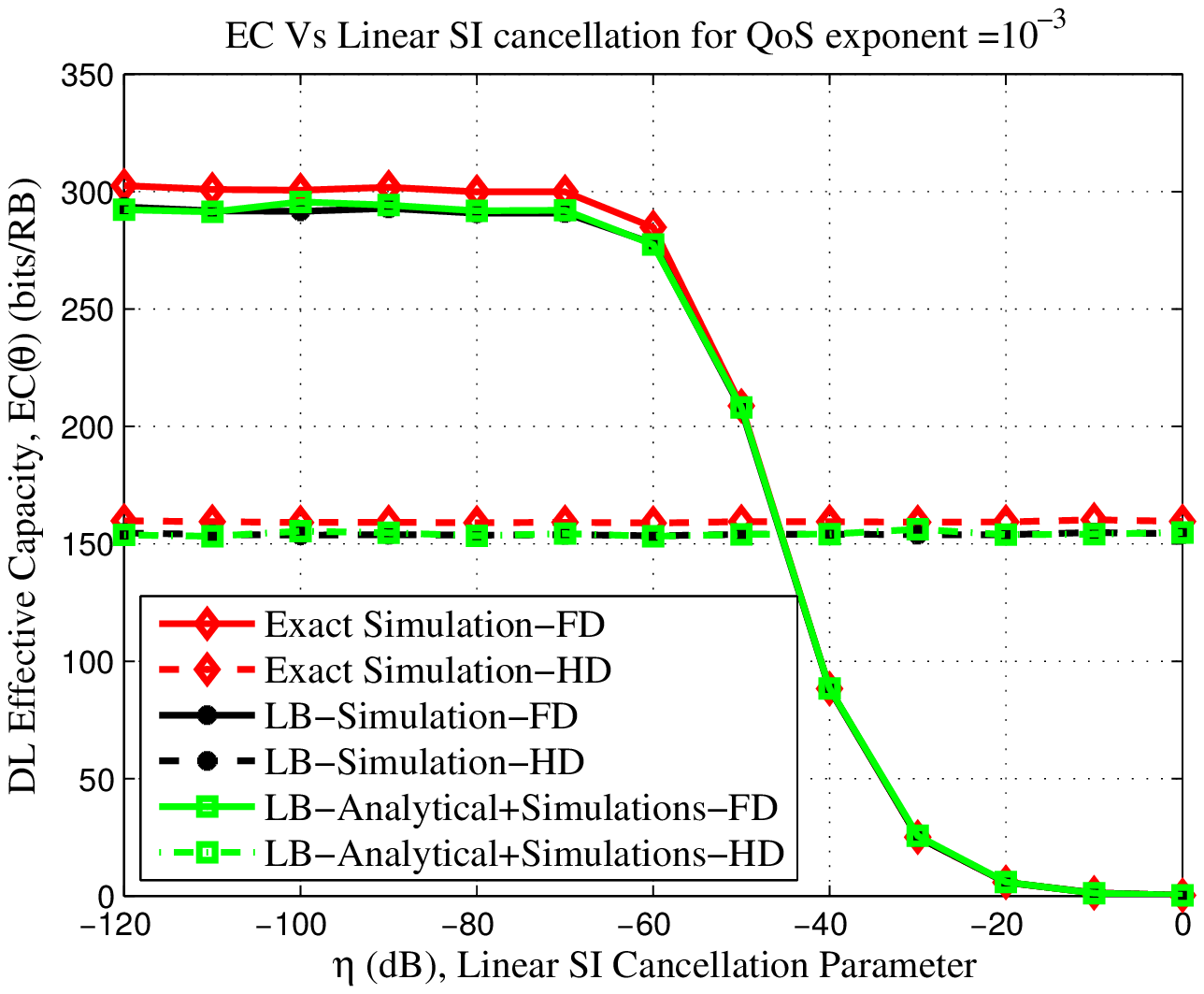}}
	\vspace{-.35cm} 
	\caption{A specific instance of dense small cell deployment as obtained 
	using a Hard Core Mat\'ern PP with density $\lambda=50 \, 
		\textrm{small cell/km}$$^2$ (a), and corresponding DL effective 
		capacity experienced by a typical UE (uniformly distributed in the 
		dashed small--cell) vs. linear self--interference 
		suppression ratio, for HD and FD (exact and lower bounds) (b). QoS 
		exponent $\theta=10^{-3} (1/bit)$, non linear self--interference 
		cancellation parameter $\kappa=1$.}
	\label{fig.6}
\end{figure*}

Fig.~\ref{fig.4} refers to a sparse system, with $\lambda=5 $ {small 
cells per km$^2$}, and reports the downlink (DL) effective capacity versus the
linear SI cancellation parameter, for QoS exponent $\theta=10^{-3}$. 
For the given QoS exponent, it can be inferred that a maximum gain of 1.93X can 
be achieved with the help of a perfect FD system. A similar maximum gain of 
1.89X is reported in Fig.~\ref{fig.6}, this time for the realization of a denser
Mat\'ern HCPP with $\lambda=50$ small cells per km$^2$ and a QoS exponent 
$\theta=10^{-3}$. 
In both cases, a trade--off value for the linear SI cancellation 
parameter at which the FD operation mode outperforms HD in terms of downlink EC 
can be found, namely $-50$~dB for the former scenario and $-45$~dB for the 
latter. 
Moreover, most of the maximum FD gain obtainable can already be achieved for 
$\eta=-80$~dB in the first scenario and $\eta=-70$~dB for the second one, which 
are readily provided by current technology.
In both scenarios, the non linear SI cancellation parameter was set to 
$\kappa=1$.

The second important result that can be observed from these figures is the fact 
that the lower bound proposed in \eqref{eq13} is tight. Specifically, the black 
curves, for the lower bound computed through simulations, and the green ones, 
for the lower bound computed through analysis and simulations, are practically 
overlapped and very close to the red curves that report the exact value of EC. 

It is worth observing that the lower bound is closer to the exact values if 
the system becomes more crowded, i.e., for a higher density of BSs. 
In addition, as tabulated in Table \ref{tabel2} and discussed in Section 
\ref{sec:theoretical}, the analytical approach has a complexity almost 
independent of the network size and significantly lower with respect to the exact computation of EC. Thus, our method to analyze statistical QoS 
performance of HCN is scalable with the network size.
\begin{table}[t]
	\vspace{-.25cm}
	\center
	\caption[Time Elapsed in Analyzing the System Performance]{Time Elapsed in 
	Analyzing the System Performance\footnotemark{}}
	\begin{tabular}{ccc}
		\hline
		\cmidrule(r){1-3}
		Scenario    & Exact Analysis  & Proposed Lower Bound \\
		\midrule
		Fig. \ref{fig.4}        & 370 s            &  17 s     \\
		Fig. \ref{fig.6}       & 2220 s           &   21 s      \\
		\bottomrule
	\end{tabular}
	\label{tabel2}
\end{table} 
\footnotetext{All the simulations are carried out with an Intel Core i5-2.53GHz processor and 4G RAM on a Dell Inspiron 5010.}
\section{Conclusions}
\label{sec:conclusions}
In this paper we introduced a lower bound for the evaluation of the effective 
capacity in a generic wireless scenario. Based on the proposed lower bound we 
built a scalable mathematical framework to analyze the statistical QoS 
performance of dense next generation HCNs. Our proposed scheme helped us 
analyze HD and imperfect FD HCNs from an EC perspective with very good accuracy 
at only a fraction of the complexity needed for an exact analysis.

\bibliographystyle{IEEEtran}
\bibliography{References}

\end{document}